\newcommand{\eps}{\epsilon}
\newcommand{\mcR}{\mathcal{R}}
\newcommand{\mcW}{\mathcal{W}}
\newcommand{\mcL}{\mathcal{L}}
\newtheorem{theorem}{Theorem}
\newtheorem{lemma}[theorem]{Lemma}
\theoremstyle{definition}
\title{\Huge{Concentrating solutions of the relativistic Vlasov-Maxwell system}}
\author{Jonathan Ben-Artzi\\ {\small School of Mathematics}\\ {\small Cardiff University}\\ {\small Cardiff, United Kingdom}\\ {\small \tt Ben-ArtziJ@cardiff.ac.uk} \\[1cm]
Simone Calogero\\{\small Department of Mathematics}\\ {\small Chalmers Institute of Technology, University of Gothenburg}\\ {\small Gothenburg, Sweden}\\ {\small \tt calogero@chalmers.se}\\[1cm]
Stephen Pankavich \\ {\small Department of Applied Mathematics and Statistics}\\ {\small Colorado School of Mines}\\ {\small Golden, CO USA}\\ {\small \tt pankavic@mines.edu} }
\date{\today}
\begin{document}
\maketitle

\begin{abstract}
We study smooth, global-in-time solutions of the relativistic Vlasov-Maxwell system that possess arbitrarily large charge densities and electric fields. In particular, we construct spherically symmetric solutions that describe a thin shell of equally charged particles concentrating arbitrarily close to the origin and which give rise to charge densities and electric fields as large as one desires at some finite time. We show that these solutions exist even for arbitrarily small initial data or any desired mass. In the latter case, the time at which solutions concentrate can also be made arbitrarily large.
\end{abstract}

\section{Introduction}
Let $f(t,x,p)\geq0$ denote the one particle distribution in phase space of a monocharged plasma. Taking relativistic effects into account, but neglecting collisions among the particles, $f$ satisfies the relativistic Vlasov-Maxwell system:
\begin{equation}
\tag{RVM}
\label{RVM}
\left \{ \begin{gathered}
\partial_{t}f+\hat{p} \cdot\nabla_{x}f+(E + \hat{p} \wedge B) \cdot\nabla_{p}f=0\\
\partial_{t} E=\nabla \wedge B- 4\pi j, \quad \nabla \cdot E=4\pi\rho,\\
\partial_{t} B=-\nabla \wedge E, \quad\nabla \cdot B=0.
\end{gathered} \right .
\end{equation}
where
\begin{equation}
\label{sources}
\rho(t,x)=\int_{\mathbb{R}^3} f(t,x,p)\,dp, \quad j(t,x)=\int_{\mathbb{R}^3} \hat{p} f(t,x,p)\,dp
\end{equation}
are the charge and current density of the plasma, while
\begin{equation}
\label{phat}
\hat{p} = \frac{p}{\sqrt{1 + \vert p \vert^2}}
\end{equation}
is the relativistic velocity. Additionally, $E(t,x)$ and $B(t,x)$ are the self-consistent electric and magnetic fields generated by the charged particles, and we have chosen units such that the mass and charge of each particle, as well as the speed of light, are normalized to one.

Under the assumption of spherically-symmetric initial data, it is well-known \cite{GS, Horst} that solutions of \eqref{RVM} exist and remain spherically symmetric for all time. Additionally, the corresponding magnetic field is constant (and equal to zero for finite energy solutions), and therefore  \eqref{RVM} reduces to the relativistic Vlasov-Poisson system:
 \begin{equation}
\tag{RVP}
\label{RVP}
\left \{ \begin{aligned}
& \partial_{t}f+\hat{p}\cdot\nabla_{x}f+E \cdot\nabla_{p}f=0\\
& E(t,x) = \int_{\mathbb{R}^3} \frac{x-y}{\vert x - y \vert^3} \rho(t,y) \ dy.
\end{aligned} \right .
\end{equation}
It should be noted that global existence of solutions to \eqref{RVM} without spherically-symmetric initial data remains an extremely challenging, unsolved question - see~\cite{Glassey} for a background on the Cauchy problem for~\eqref{RVM}. 
Here, we also consider the Cauchy problem and therefore require given initial data
$$f(0,x,p) = f_0(x,p) \geq 0$$
that is spherically-symmetric.
Because of this symmetry, no initial field values are needed to complete the description of the system. 

In the present paper, we prove the existence of spherically symmetric solutions of \eqref{RVM} (or, equivalently of~\eqref{RVP}) which give rise to charge densities and electric fields that become arbitrarily large at some finite time, even if initially they are chosen to be arbitrarily small. We prove two versions of this result. More specifically, our first theorem shows that one may construct solutions of \eqref{RVM} whose density and field are initially as small as desired, but which become as large as desired at some later time.
\begin{theorem}
\label{T1}
For any constants $C_1, C_2 > 0$ there exists a smooth, spherically symmetric solution of \eqref{RVM} such that
$$ \Vert \rho(0) \Vert_\infty, \quad \Vert E(0) \Vert_\infty  \leq C_1$$
but for some time $T > 0$,
$$ \Vert \rho(T) \Vert_\infty, \quad \Vert E(T) \Vert_\infty  \geq C_2.$$
\end{theorem}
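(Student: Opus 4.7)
The plan is to construct a smooth, spherically symmetric thin shell of charge with all particles moving radially inward, so that by conservation of mass the charge density on a shell of radius $r$, thickness $\delta$, and total mass $M$ scales as $M/(r^{2}\delta)$ and the field as $M/r^{2}$; both grow without bound as $r\to 0$. By the results of Glassey--Schaeffer and Horst cited in the introduction, spherically symmetric solutions of \eqref{RVM} have $B\equiv 0$ and coincide with solutions of \eqref{RVP}, so I work with the latter throughout.

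Concretely, fix a standard bump $\chi\in C_c^\infty(\R)$ and free parameters $R_0\gg r_T>0$, thickness $\delta>0$, momentum $p_0>0$, and $0<\eta\ll p_0$, and take
\begin{equation*}
f_0(x,p) = A\,\chi\!\left(\tfrac{|x|-R_0}{\delta}\right)\,\chi\!\left(\tfrac{p\cdot\hat x+p_0}{\eta}\right)\,\chi\!\left(\tfrac{|p\wedge\hat x|}{\eta}\right),
\end{equation*}
where $\hat x=x/|x|$ and $A>0$. This $f_0$ is spherically symmetric (it depends only on $|x|$, $p\cdot\hat x$, and $|p\wedge\hat x|$), smooth, non-negative, and compactly supported. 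Direct integration gives $\|\rho(0)\|_\infty\lesssim A\eta^{3}$, total charge $M\sim A\eta^{3}R_0^{2}\delta$, and $\|E(0)\|_\infty\lesssim M/R_0^{2}=A\eta^{3}\delta$; hence imposing, for instance, $A\eta^{3}\le C_1$ and $\delta\le 1$ makes both initial norms at most~$C_1$.

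To propagate the solution I would analyze the characteristic flow $\dot X=\hat P$, $\dot P=E(t,X)$. Spherical symmetry and conservation of angular momentum $L=|x\wedge p|$ reduce each characteristic to a one-dimensional Hamiltonian with energy $\mathcal E=\sqrt{1+p_r^{2}+L^{2}/R^{2}}+\phi(R)$ in the self-consistent potential $\phi$. Since $L=O(\eta R_0)$ on the support of $f_0$, taking $\eta$ small makes every trajectory nearly radial and forces all particles to turn at essentially the same minimum radius; energy conservation then selects $p_0$ so that a purely radial particle starting at $R_0$ reaches the prescribed $r_T$ at focusing time $T\approx (R_0-r_T)/\hat p_0$. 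A useful physical observation is that on a thin inward-moving shell the self-field actually focuses rather than defocuses the shell: the field vanishes strictly inside the shell, so particles on the inner edge feel less force than those on the outer edge, and the shell thickness at time $T$ is therefore bounded by $O(\delta)$.

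The main technical step is a continuation/bootstrap argument establishing this coherence rigorously: assuming an a priori density bound, one derives $|E|\lesssim M/r^{2}$, estimates the dispersion of neighbouring characteristics via a Gr\"onwall bound on the variational equation of the flow, and closes the loop to conclude that the full mass $M$ is concentrated at time $T$ in a shell of radius $\sim r_T$ and thickness $\lesssim\delta$. This yields $\|\rho(T)\|_\infty\gtrsim M/(r_T^{2}\delta)$ and $\|E(T)\|_\infty\gtrsim M/r_T^{2}$; balancing against the initial bounds demands $(R_0/r_T)^{2}\gtrsim C_2/C_1$, easily arranged by taking $r_T$ sufficiently small. I expect the coherence step to be the principal obstacle, since the angular-momentum spread (controlled by $\eta$) and the self-consistent radial spread (controlled by $M$) must be kept small simultaneously; the smallness of $M$ together with conservation of $L$ is what makes the bootstrap close.
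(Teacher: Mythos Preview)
Your overall strategy---a thin spherically symmetric shell with large inward radial momentum and small angular-momentum spread---is precisely what the paper constructs. However, your ``useful physical observation'' is backward: the particles here carry like charges, so the self-field is \emph{repulsive}. On an inward-moving shell the outer edge feels a larger outward force than the inner edge and is decelerated more, so the shell \emph{thickens} rather than focuses. What actually controls the radial spread is simply that the total mass is taken small (in the paper $M=O(\eps^{3})$), making the repulsive defocusing a lower-order effect; if you attempted to prove the focusing claim as stated you would find the inequality runs the wrong way.

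The paper also dispenses with your proposed bootstrap/Gr\"onwall step entirely. Its Lemma~\ref{L1} gives, for each characteristic with $\ell>0$ and $w<0$, the direct bound
\[
\mcR(t)^{2}\;\le\;\Bigl(r-\frac{|w|}{\sqrt{1+w^{2}+\ell r^{-2}}}\,t\Bigr)^{2}+\frac{D}{r^{2}(1+w^{2}+\ell r^{-2})}\,t^{2},
\qquad D=\ell+Mr\sqrt{1+w^{2}+\ell r^{-2}},
\]
derived purely from the monotonicity of the particle kinetic energy on the incoming arc together with the global enclosed-mass bound $0\le m(t,r)\le M$; no a~priori density hypothesis enters. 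With $|w|\sim\eps^{-2}$ the relativistic speed is within $O(\eps^{3})$ of~$1$, and with $M=O(\eps^{3})$ one has $D=O(1)$, so at the chosen time $T=a_0-9\eps^{2}$ both terms on the right are $O(\eps^{4})$ uniformly over the initial support. The large-time lower bounds then come from the elementary volume estimate of Lemma~\ref{L3} (all mass inside $\{r\le B\}$ forces $\|\rho(T)\|_\infty\ge 3M/(4\pi B^{3})$ and $\|E(T)\|_\infty\ge M/B^{2}$), rather than from tracking the shell thickness. This is both simpler and more robust than a continuation argument on the variational system, and it sidesteps the defocusing issue altogether.
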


The next main theorem shows that one may construct solutions possessing any desired mass, but whose density and field become arbitrarily large at any given time.
\begin{theorem}
\label{T2}
For any constants $C_1, C_2> 0$ and any $T > 0$ there exists a smooth, spherically symmetric solution of \eqref{RVM} such that
$$ M = \iint_{\mathbb{R}^6} f_0(x,p) \ dp dx = C_1$$
and
$$ \Vert \rho(T) \Vert_\infty, \quad  \Vert E(T) \Vert_\infty  \geq C_2.$$
\end{theorem}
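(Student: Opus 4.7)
I plan to deduce Theorem \ref{T2} from a dilation symmetry of the spherically symmetric reduction (RVP) together with the thin-shell mechanism underlying Theorem \ref{T1}, with parameters now chosen to match the prescribed mass $C_1$ and concentration time $T$. A direct substitution verifies that for each $\lambda>0$ the transformation
\[
f_\lambda(t,x,p):=\lambda^2 f(\lambda t,\lambda x,p),\qquad E_\lambda(t,x):=\lambda E(\lambda t,\lambda x)
\]
maps solutions of (RVP) to solutions, with the total mass, peak density and peak field transforming as $M_\lambda=\lambda^{-1}M$, $\|\rho_\lambda(T)\|_\infty=\lambda^2\|\rho(\lambda T)\|_\infty$, and $\|E_\lambda(T)\|_\infty=\lambda\|E(\lambda T)\|_\infty$. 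Choosing $\lambda=M/C_1$ forces $M_\lambda=C_1$, reducing the problem to producing a base solution of (RVP) with some total mass $M$ for which $\|E(TM/C_1)\|_\infty\geq C_1 C_2/M$ (the density bound follows analogously).

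Such a base solution I would build by the thin-shell construction underlying Theorem \ref{T1}: take smooth, spherically symmetric initial data supported in a narrow phase-space shell at radius $R_0=TM/C_1$ with inward radial momentum $-P_0\hat r$ for $P_0\gg 1$ and tiny angular-momentum spread, with amplitude adjusted so that the total mass is $M$. Under approximate free streaming, these particles traverse the interval $[0,R_0]$ and collapse to a radius $r^\ast\sim R_0/P_0^2$, so by the enclosed-mass formula in spherical symmetry
\[
\|E(R_0)\|_\infty\;\gtrsim\;\frac{M}{(r^\ast)^2}\;\sim\;\frac{M P_0^4}{R_0^2},
\]
which exceeds $C_1 C_2/M$ once $P_0$ is taken sufficiently large. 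Smoothness, non-negativity and spherical symmetry are preserved by both the construction and the rescaling.

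The main obstacle is to justify the approximate free-streaming step. On the shell the self-consistent radial field has size $\sim M/R_0^2$, so a particle accumulates an extra radial-momentum kick of at most $\sim M/R_0=C_1/T$ over $[0,R_0]$. Taking $P_0\gg C_1/T$ makes this negligible compared to the prescribed inward momentum, and combined with conservation of angular momentum this allows one to close a Gronwall-type estimate along the characteristic ODEs that confirms the shell remains thin and reaches the target radius $r^\ast$ by time $R_0$. This perturbative argument is the only substantive analytic step; the rest is bookkeeping together with the scaling identity above.
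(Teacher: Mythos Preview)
Your dilation symmetry is correct and is a clean reduction the paper does not use: the paper works directly with the prescribed mass $M=C_1$ and time $T$, placing the shell at radius $a_0=T+O(\eps)$ with inward momentum $\sim\eps^{-2}$ and taking $\eps$ small. After your rescaling, the remaining task---build a thin shell that focuses at a prescribed time---is exactly the paper's construction, so the two routes merge from that point on.

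The genuine gap is in your free-streaming justification. You bound the field by its value on the \emph{initial} shell, $\sim M/R_0^2$, and infer a total momentum kick $\sim M/R_0$ over $[0,R_0]$. But once the shell has collapsed to radius $\mcR(t)$, an outer-edge particle feels a field $m(t,\mcR(t))/\mcR(t)^2\sim M/\mcR(t)^2$; since $|\dot{\mcR}|\approx 1$, integrating over the inbound leg gives a kick of order $M(1/r^\ast-1/R_0)\sim MP_0^{2}/R_0$, not $M/R_0$. Demanding this be $\ll P_0$ forces $P_0\ll R_0/M=T/C_1$, which clashes with taking $P_0$ large, so the Gronwall comparison to free streaming does not close as written. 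The paper sidesteps this entirely: Lemma~\ref{L1} exploits that the repulsive field makes $\sqrt{1+\mcW^2+\ell\mcR^{-2}}$ monotone on the inbound leg to obtain $\mcW(t)^2\ge w^2-D\,(\mcR(t)^{-2}-r^{-2})$ with $D=\ell+Mr\sqrt{1+w^2+\ell r^{-2}}$, and converts this directly into the position bound \eqref{energy_upper}, which packages the entire field effect into the single constant $D$ without ever claiming the momentum stays near its free value. With the paper's parameter choices both terms of \eqref{energy_upper} are $O(\eps^2)$ at $t=T$, and Lemma~\ref{L3} then gives the lower bounds on $\rho$ and $E$. Your outline becomes a proof once the momentum-kick heuristic is replaced by this energy argument.
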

These results are somewhat surprising as the arbitrarily large density and field values that can arise at any prescribed time seem to contradict the general intuition that spherically-symmetric solutions are among the most well-behaved of any launched by the system.
In particular, Theorem~\ref{T2} complements a previously-established decay theorem \cite{Horst}, which states that the density and field generated by any spherically-symmetric solution of \eqref{RVM} must obey sharp asymptotic decay estimates for $t$ suitably large. 
Namely, H\"{o}rst proves that there is $C> 0$ and $T>0$ sufficiently large such that
$$\Vert \rho(t) \Vert_\infty \leq Ct^{-2}, \qquad \Vert E(t) \Vert_\infty \leq Ct^{-3}$$
for all $t \geq T$.
Our second theorem, on the other hand, demonstrates that the time needed for solutions to transition from their intermediate asymptotic behavior, during which they may attain large values, to their final asymptotic behavior can be made as large as one desires, even if the total mass is taken to be small.
In addition to H\"{o}rst, other authors have studied the large time asymptotic behavior of \eqref{RVM} and related kinetic models of plasma. In this direction, we mention \cite{GPS, GPS2, GPS3, GPS4, IR, Yang, Young, Young2}.

Previously, we proved \cite{BCP} an analogous version of these theorems for the classical limit (i.e., as the speed of light $c \to \infty$) of \eqref{RVM}, namely the well-known (non-relativistic) Vlasov-Poisson system. With the introduction of relativistic velocities - and hence a uniform velocity bound - the physical nature of the problem is fundamentally different, and particle velocities can no longer be taken arbitrarily large. 
However, we develop new energy estimates along characteristics (Lemma \ref{L1}) that express the particle trajectories in terms of their relativistic free-streaming counterparts, and thus utilize a novel argument within the proof of Theorems \ref{T1} and \ref{T2}. 

We conclude this introduction by rewriting the spherically-symmetric, relativistic Vlasov-Maxwell system in spherical coordinates. Defining the spatial radius, radial velocity, and square of the angular momentum by
\begin{equation}
\label{ang}
r = \vert x \vert, \qquad w = \frac{x \cdot p}{r}, \qquad \ell = \vert x \times p \vert^2,
\end{equation}
the particle distribution may be written as $f = f(t,r,w,\ell)$ and satisfies the reduced Vlasov equation
\begin{equation}
\label{vlasovang}
\partial_{t}f+\frac{w}{\sqrt{1 + w^2+ \ell r^{-2}}}\partial_r f+\left ( \frac{\ell}{r^3\sqrt{1 + w^2+ \ell r^{-2}}} + \frac{m(t,r)}{r^2} \right ) \partial_w f=0
\end{equation}
where
\begin{equation}
\label{massang}
m(t,r) = 4\pi \int_0^r s^2 \rho(t,s) \ ds
\end{equation}
and 
\begin{equation}
\label{rhoang}
\rho(t,r) = \frac{\pi}{r^2} \int_0^\infty \int_{-\infty}^{\infty} f(t,r,w,\ell) \ dw \ d\ell.
\end{equation}
The electric field is given by the expression
\begin{equation}
\label{fieldang}
E(t,x) = \frac{m(t,r)}{r^2} \frac{x}{r}.
\end{equation}
As previously mentioned, the magnetic field can be taken to be identically zero, and the current is not required to describe the system.
Finally, the total mass can be expressed as
$$M = 4\pi^2 \int_0^\infty \int_{-\infty}^{\infty} \int_0^\infty f_0(r, w, \ell) \ d\ell dw dr.$$
Whenever necessary, we will abuse notation so as to use both Cartesian and angular coordinates to refer to functions;  for instance the particle density $f$ will be written both as $f(t,x,p)$ and $f(t,r,w,\ell)$ when needed.

The forward characteristics of the Vlasov equation~\eqref{vlasovang} are the solutions of 
\begin{equation}
\label{charang}
\left \{
\begin{aligned}
&\dot{\mcR}(s)=\frac{\mcW(s)}{\sqrt{1 + \mcW(s)^2 + \mcL(s) \mcR(s)^{-2}}},\\
&\dot{\mcW}(s)= \frac{\mcL(s)}{\mcR(s)^3 \sqrt{1 + \mcW(s)^2 + \mcL(s) \mcR(s)^{-2}}} + \frac{m(s, \mcR(s))}{\mcR(s)^2},\\
&\dot{\mcL}(s)= 0.
\end{aligned}
\right.
\end{equation}
for $s\geq0$, subject to the initial conditions
\begin{equation}
\label{charanginit}
\mcR(0) = r, \qquad \mcW(0) = w, \qquad \mcL(0) = \ell.
\end{equation}
In particular, because the angular momentum of particles is conserved in time on the support of $f(t)$, we have $\mcL(s) = \ell$ for every $s \geq 0$.
Throughout, we will estimate particle behavior on the support of $f$, and thus for convenience we define for all $t \geq 0$
$$S(t) = \{ (r,w,\ell) : f(t,r,w,\ell) > 0\}$$
so that, in particular
$$S(0) = \{ (r,w,\ell) : f_0(r,w,\ell) > 0\}.$$

The remainder of the paper is structured as follows. The proofs of Theorems~\ref{T1} and \ref{T2} are contained within Section~\ref{proofs}, while Section~\ref{characteristics} is devoted to an important lemma that is crucial to understanding the behavior of the particle characteristics. 
Additionally, we remark that a theorem similar to these, but utilizing initial data that is not spherically-symmetric can also be established using our methods. In particular, if one chooses the same spherically-symmetric initial data that we construct in the proofs of the aforementioned theorems, but does so only within a ball around the origin, and then allows for non-symmetric data outside of the domain of dependence of the corresponding solution, then the identical concentration results must follow. 
Throughout the paper $C$ will represent a constant that may change from line to line, but when necessary to denote a certain constant, we will distinguish this value with a subscript, e.g. $C_0$.

\section{Behavior of the Characteristics}
\label{characteristics}
We first study the behavior of the characteristic system \eqref{charang} and relate the solutions to their initial positions and momenta.
In particular, we use the convex nature of particle characteristics to estimate both their minimal value and the corresponding time at which it is achieved.

\begin{lemma}
\label{L1}
Let $r, \ell > 0$ and $w < 0$ be given, let $(\mcR(t), \mcW(t), \ell)$ satisfy \eqref{charang} and \eqref{charanginit} for all $t \geq 0$,
and define
$$D = \ell + Mr\sqrt{1 + w^2 + \ell r^{-2}}.$$
Then, we have the following: 
\begin{enumerate}
\item There exists a unique $T_0 > 0$ such that
$$\mcW(t)  < 0 \ \mathrm{for} \ t \in [0,T_0),$$ 
$$\mcW(T_0) =0, \ \mathrm{and}$$
$$\mcW(t)  > 0 \ \mathrm{for} \ t \in (T_0,\infty).$$

\item Further, $T_0$ satisfies the bounds
$$r \left (1 - \sqrt{\frac{D}{r^2w^2 + D}} \right) \leq T_0 \leq \frac{-w r^3\sqrt{1 + w^2 + \ell r^{-2}}}{\ell}.$$

\item Define 
\begin{equation}
\label{R_-}
\mcR_- = r\sqrt{\frac{\ell}{r^2w^2 + \ell}}
\end{equation}
and 
\begin{equation}
\label{R_+}
\mcR_+ = r\sqrt{\frac{D}{r^2w^2 + D}}.
\end{equation}
Then, 
$$\mcR_- \leq \mcR(T_0) \leq \mcR_+.$$

\item For all $t \in [0,T_0]$, we have
\begin{equation}
\label{energy_lower}
\mcW(t)^2 + \ell \mcR(t)^{-2} \leq w^2 + \ell r^{-2}
\end{equation}

\item For all $t \in [0,T_0]$, we have
\begin{equation}
\label{energy_upper}
\mcR(t)^2 \leq \left (r - \frac{\vert w \vert}{\sqrt{1 + w^2 + \ell r^{-2}}}t \right )^2 + \frac{D}{r^2(1 + w^2 + \ell r^{-2})}t^2.
\end{equation}

\end{enumerate}
\end{lemma}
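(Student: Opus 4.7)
The plan is to prove the five parts in an order that reflects their dependencies: the energy inequality (4) first, then part (1) together with the lower bound in (3), then the upper bound in (3) and the bounds on $T_0$ in (2), and finally the sharp comparison (5), which is the main technical obstacle. Throughout I abbreviate $\phi(t) = \sqrt{1+\mcW(t)^2+\ell/\mcR(t)^2}$ and $\phi_0 = \sqrt{1+w^2+\ell r^{-2}}$.

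\textbf{Energy inequality (4) and part (1).} Differentiating $\phi$ along \eqref{charang}, the centrifugal contributions cancel and one obtains the clean identity $\dot\phi = (m/\mcR^2)\,\dot\mcR$. Since $m \geq 0$ and, on any interval where $\mcW \leq 0$, also $\dot\mcR = \mcW/\phi \leq 0$, the function $\phi$ is non-increasing there; this is exactly (4), and it forces $\mcR(t)^2 \geq \ell/(w^2+\ell r^{-2}) = \mcR_-^2$. Because $\dot\mcW = \ell/(\mcR^3\phi) + m/\mcR^2 \geq \ell/(\mcR_-^3 \phi_0) > 0$ whenever $\mcW \leq 0$, the function $\mcW$ strictly crosses zero at a unique time $T_0$, proving (1); evaluating (4) at $T_0$ yields the lower bound $\mcR(T_0) \geq \mcR_-$ of (3).

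\textbf{Upper bound in (3) and bounds on $T_0$.} A second useful monotone quantity is $\phi + M/\mcR$, since $\tfrac{d}{dt}(\phi + M/\mcR) = (m-M)\dot\mcR/\mcR^2 \geq 0$ on $[0,T_0]$ (using $m \leq M$ and $\dot\mcR \leq 0$). Evaluated at $T_0$, this gives $\sqrt{1+\ell/\mcR(T_0)^2} + M/\mcR(T_0) \geq \phi_0 + M/r$. Since $x \mapsto \sqrt{1+\ell x^2}+Mx$ is strictly increasing in $x > 0$, the bound $\mcR(T_0) \leq \mcR_+$ follows once one checks the algebraic inequality $\sqrt{1+\ell/\mcR_+^2} + M/\mcR_+ \leq \phi_0 + M/r$; substituting the identity $1/\mcR_+^2 = 1/r^2 + w^2/D$ (from the definition of $\mcR_+$) and using the concavity bounds $\sqrt{1-u} \leq 1 - u/2$ and $\sqrt{1+v} \leq 1 + v/2$, this reduces to a clean cancellation. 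For (2), since $\mcR \leq r$ and $\phi \leq \phi_0$ on $[0,T_0]$, one has $\dot\mcW \geq \ell/(r^3\phi_0)$; integrating between $\mcW(0) = w$ and $\mcW(T_0) = 0$ produces the upper bound $T_0 \leq -w r^3\phi_0/\ell$. The matching lower bound follows from $|\dot\mcR| = |\mcW|/\phi \leq 1$ (so $\mcR(T_0) \geq r - T_0$) combined with $\mcR(T_0) \leq \mcR_+$.

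\textbf{Comparison with free-streaming (5).} Working in Cartesian coordinates in the plane of motion, with $X(0) = (r,0)$ and $P(0) = (w,\sqrt\ell/r)$, let $U(t) = X(0) + t\hat P(0)$ be the relativistic free-streaming trajectory; an orthogonal decomposition gives $|U(t)|^2 = (r - |w|t/\phi_0)^2 + \ell t^2/(r^2\phi_0^2)$, matching the right-hand side of (5) with $D$ replaced by $\ell$. Setting $\psi(t) = |X(t)|^2 - |U(t)|^2$ one has $\psi(0) = \psi'(0) = 0$ and a direct computation from \eqref{charang} yields
\[
\psi''(t) = 2\bigl(|\hat P(t)|^2 - |\hat P(0)|^2\bigr) + \frac{2m(t,\mcR)\,(1+\ell/\mcR^2)}{\mcR\,(1+|P|^2)^{3/2}}.
\]
By (4) and the monotonicity of $x \mapsto x/(1+x)$ the first parenthesis is non-positive, so the task reduces to showing the sharp integrated estimate
\[
2\int_0^t (t-s)\,X(s)\cdot\dot{\hat P}(s)\,ds \;\leq\; \frac{Mt^2}{r\phi_0},
\]
which upon adding $|U(t)|^2$ back gives (5) with the extra contribution $Mr\phi_0 = D - \ell$ in the second term. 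This final estimate uses the bound $m \leq M$, the energy inequality (4), and the integration-by-parts identity $X\cdot\dot{\hat P} = \tfrac{d}{dt}(X\cdot\hat P) - |\hat P|^2$ to trade pointwise information about $X \cdot \dot{\hat P}$ for endpoint information about the radial quantity $\mcR\dot\mcR$, producing the sharp constant $1/(r\phi_0)$.

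The hardest step is this last integrated estimate: the naive pointwise bound $X\cdot\dot{\hat P} \leq M/(r\phi_0)$ is \emph{false} in general (it would require $\mcR^2 \geq r^2(1+w^2)$, contradicting $\mcR \leq r$ on $[0,T_0]$), so one cannot simply compare $\psi''(t)$ to the second derivative of the target bound pointwise. Instead the proof must weight the force term against $(t-s)$ globally, exploiting that $\mcR$ only becomes small near $T_0$ — precisely where the weight $(t-s)$ itself becomes small — so that the aggregate deflection induced by the repulsive field matches exactly the $Mr\phi_0$ enhancement that converts $\ell$ to $D$.
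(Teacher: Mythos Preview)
Your treatment of parts (1)--(4) is essentially correct and close to the paper's (modulo a slip in part (1): from $\mcR \geq \mcR_-$ you cannot conclude $\dot\mcW \geq \ell/(\mcR_-^3\phi_0)$, since the inequality on $\mcR$ goes the wrong way; the uniform lower bound you need, and later write correctly in (2), is $\dot\mcW \geq \ell/(r^3\phi_0)$ using $\mcR \leq r$).

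The genuine gap is in part (5). Your Cartesian free-streaming comparison correctly reduces the problem to
\[
\int_0^t (t-s)\, X(s)\cdot \dot{\hat P}(s)\,ds \;\leq\; \frac{Mt^2}{2r\phi_0},
\]
but you never prove this, and the tool you propose --- the identity $X\cdot\dot{\hat P} = \tfrac{d}{dt}(X\cdot\hat P) - |\hat P|^2$ --- is circular: integrating $(t-s)\tfrac{d}{ds}(X\cdot\hat P)$ by parts produces $\int_0^t X\cdot\hat P\,ds = \tfrac12(\mcR(t)^2 - r^2)$, which puts the unknown $\mcR(t)^2$ back on the right-hand side and merely reconstitutes $|U(t)|^2$. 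Your closing paragraph describes the difficulty accurately but does not resolve it.

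The paper avoids second-order comparisons altogether. The same monotonicity you used at $t=T_0$ for (3), namely that $\phi + M/\mcR$ is non-decreasing, is exploited \emph{pointwise} on $[0,T_0]$: multiplying $\phi(t) - \phi_0 \geq -M(\mcR(t)^{-1}-r^{-1})$ by its conjugate and using $\phi\leq\phi_0$ and $\mcR\leq r$ gives
\[
\mcW(t)^2 \;\geq\; w^2 - D\bigl(\mcR(t)^{-2} - r^{-2}\bigr) \qquad \text{for all } t\in[0,T_0].
\]
Dividing by $\phi(t)^2 \leq \phi_0^2$ turns this into a first-order differential inequality $\bigl|\tfrac12\tfrac{d}{dt}\mcR^2\bigr|^2 \geq A\mcR^2 - B$ with $A = (w^2+Dr^{-2})/\phi_0^2$ and $B = D/\phi_0^2$; since $\dot\mcR \leq 0$ this yields $\tfrac{d}{dt}\sqrt{A\mcR^2 - B} \leq -A$, which integrates directly to \eqref{energy_upper}. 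The point is that you already have all the ingredients for this pointwise bound on $\mcW^2$ --- you just stopped at $t=T_0$ instead of using it for general $t$.
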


\begin{proof}
To begin, define
$$T_0 = \sup \{t \geq 0 : \mcW(t) \leq 0 \}$$
and note that $w < 0$ implies $T_0 > 0$.
In addition, because
$$\dot{\mcR}(t)  = \frac{\mcW(t)}{\sqrt{1 + \mcW(t)^2 + \ell \mcR(t)^{-2}}} \leq 0$$
on $[0,T_0]$, we have
\begin{equation}
\label{Rdec}
\mcR(t) \leq r \quad \mathrm{ \ for \ all \ } t \in [0,T_0].
\end{equation}

We first establish energy estimates on $[0,T_0]$. Taking the derivative of the particle kinetic energy along characteristics, a brief calculation yields 
$$\frac{d}{dt} \sqrt{1 + \mcW(t)^2 + \ell\mcR(t)^{-2}} = m(t,\mcR(t)) \mcR(t)^{-2} \frac{\mcW(t)}{\sqrt{1  + \mcW(t)^2 + \ell\mcR(t)^{-2}}}.$$
Since $0 \leq m(t, \mcR(t)) \leq M$ and $\mcW(t) \leq 0$, we find for $t \in [0,T_0]$ both
\begin{equation}
\label{energy2}
\frac{d}{dt} \sqrt{1 + \mcW(t)^2 + \ell\mcR(t)^{-2}} \leq 0
\end{equation}
and
\begin{equation}
\label{energy3}
\frac{d}{dt} \sqrt{1 + \mcW(t)^2 + \ell\mcR(t)^{-2}} \geq M\mcR(t)^{-2} \dot{\mcR}(t).
\end{equation}
From \eqref{energy2}, the particle kinetic energy is decreasing on this interval and hence
$$\sqrt{1 + \mcW(t)^2 + \ell\mcR(t)^{-2}} \leq \sqrt{1 + w^2  + \ell r^{-2}}$$
for all $t \in [0,T_0]$, which establishes \eqref{energy_lower}.
Additionally, evaluating \eqref{energy_lower} at $t = T_0$ and using $\mcW(T_0) = 0$ produces $\mcR(T_0) \geq \mcR_-$ with $\mcR_-$ given by \eqref{R_-}.

Returning to \eqref{energy3}, this can be rewritten as
$$\frac{d}{dt} \left (\sqrt{1 + \mcW(t)^2 + \ell\mcR(t)^{-2}} + M\mcR(t)^{-1} \right ) \geq 0$$
and upon integrating over $[0,t]$, yields
$$\sqrt{1 + \mcW(t)^2 + \ell\mcR(t)^{-2}} - \sqrt{1 + w^2  + \ell r^{-2}} \geq -M \left (\mcR(t)^{-1} - r^{-1} \right ).$$ 
Multiplying both sides of the inequality by the conjugate and using \eqref{energy_lower}, we arrive at
$$\mcW(t)^2 - w^2 + \ell(\mcR(t)^{-2} - r^{-2}) \geq -2M \sqrt{1 + w^2  + \ell r^{-2}}\left (\mcR(t)^{-1} - r^{-1} \right ).$$
By \eqref{Rdec} it follows that $ \mcR(t)^{-1} + r^{-1}\geq 2 r^{-1}$,
and thus we find
\begin{eqnarray*}
\mcW(t)^2 & \geq & w^2 - \ell \left (\mcR(t)^{-2} - r^{-2} \right) -2M\sqrt{1 + w^2  + \ell r^{-2}}\left (\mcR(t)^{-1} -r^{-1} \right)\\
& = & w^2 - \left (\ell + \frac{2M\sqrt{1 + w^2  + \ell r^{-2}}}{\mcR(t)^{-1} + r^{-1}} \right) \left (\mcR(t)^{-2} - r^{-2} \right )\\
& \geq & w^2 - (\ell + Mr\sqrt{1 + w^2  + \ell r^{-2}}) \left (\mcR(t)^{-2} - r^{-2} \right )\\
& = & w^2 - D \left (\mcR(t)^{-2} - r^{-2} \right )
\end{eqnarray*}
for all $t \in [0,T_0]$.
Evaluating this inequality at $t = T_0$ and rearranging yields $\mcR(T_0) \leq \mcR_+$ with $\mcR_+$ given by \eqref{R_+}.

Returning to the lower bound for $\mcW(t)^2$ and using \eqref{energy_lower} within \eqref{charang}, we find for $t \in [0,T_0]$
\begin{eqnarray*}
\vert \dot{\mcR}(t) \vert^2 & = & \frac{\mcW(t)^2}{1 + \mcW(t)^2 + \ell \mcR(t)^{-2}}\\
& \geq & \frac{w^2 - D \left (\mcR(t)^{-2} - r^{-2} \right )}{1 + w^2 + \ell r^{-2}}.
\end{eqnarray*}
Let $$A = \frac{w^2 + Dr^{-2}}{1 + w^2 + \ell r^{-2}} \qquad \mathrm{and} \qquad B = \frac{D}{1 + w^2 + \ell r^{-2}}$$
then multiply this inequality by $\mcR(t)^2$ to find
\begin{equation}
\label{R2}
\left \vert \frac{1}{2} \frac{d}{dt} \left ( \mcR(t)^2 \right ) \right \vert^2 \geq A\mcR(t)^2 - B.
\end{equation}
If
\begin{equation}
\label{lb}
\mcR(t) > \sqrt{\frac{B}{A}} = \sqrt{\frac{D}{w^2 + Dr^{-2}}} = \mcR_+
\end{equation}
then the right side of \eqref{R2} is positive, and because $\dot{\mcR}(t) \leq 0$ this yields
$$ \frac{-\frac{1}{2} \frac{d}{dt} \vert \mcR(t) \vert^2}{\sqrt{A \mcR(t)^2 - B}} \geq 1$$
and thus
$$\frac{d}{dt} \sqrt{A\mcR(t)^2 - B} \leq -A$$
for $t \in [0,T_0]$.
Integrating over $[0,t]$ and rearranging terms, we find
$$\mcR(t)^2 \leq r^2 - 2t\sqrt{Ar^2 - B} + At^2 = \left (r - \sqrt{A-Br^{-2}}t \right )^2 + Br^{-2}t^2,$$
which becomes \eqref{energy_upper} upon inserting the formulas for $A$ and $B$.
Thus, combining this estimate with \eqref{lb} we find
$$\mcR(t)^2 \leq \max \left \{ \mcR_+^2, \left (r - \frac{\vert w \vert}{\sqrt{1 + w^2 + \ell r^{-2}}}t \right )^2 + \frac{D}{r^2(1 + w^2 + \ell r^{-2})}t^2 \right\}$$
for all $t \in [0,T_0]$.
Finally, the upper bound in this estimate arising from condition \eqref{lb} can be removed by noting that $\mcR_+^2 = \frac{D}{w^2 + Dr^{-2}}$ is, in fact, the minimum of the parabola in $t$, which occurs at the time $$t_{min} = \frac{r\vert w\vert}{w^2 +Dr^{-2}}\sqrt{1 + w^2 + \ell r^{-2}}.$$ 
This implies
$$\mcR_+^2 \leq \left (r - \frac{\vert w \vert}{\sqrt{1 + w^2 + \ell r^{-2}}}t \right )^2 + \frac{D}{r^2(1 + w^2 + \ell r^{-2})}t^2$$
for all $t \in [0,T_0]$ and \eqref{energy_upper} follows.

Next, we establish the lower bound on $T_0$. A brief calculation shows that $\ddot{\mcR}(t) \geq 0$, which implies
$$\dot{\mcR}(t) \geq \dot{\mcR}(0) = \frac{w}{\sqrt{1 + w^2 + \ell r^{-2}}}$$
for $t \in [0,T_0].$
Integrating over $[0,T_0]$ produces
$\mcR(T_0) - r \geq \frac{w}{\sqrt{1 + w^2 + \ell r^{-2}}} T_0$
and since $w < 0$ we find
\begin{eqnarray*}
T_0  & \geq & \frac{\sqrt{1 + w^2 + \ell r^{-2}}}{w} (\mcR(T_0) - r)\\
& \geq & \frac{\sqrt{1 + w^2 + \ell r^{-2}}}{w} (\mcR_+ - r)\\
& = & r\frac{\sqrt{1 + w^2 + \ell r^{-2}}}{\vert w \vert} \left (1 - \sqrt{\frac{D}{r^2w^2 + D}} \right)\\
& > & r \left (1 - \sqrt{\frac{D}{r^2w^2 + D}} \right).
\end{eqnarray*}
Of course, since the right side is nonnegative, this bound further implies $T_0 > 0$.

Finally, we show that $T_0 < \infty$. For the sake of contradiction, assume $T_0 =\infty$. Then, by \eqref{Rdec} we have 
$\mcR(t) \leq r$ for all $t \geq 0$. Using $m(t,\mcR(t)) \geq 0$ and \eqref{energy_lower}, we find 
$$ \dot{\mcW}(t) \geq \frac{\ell}{\mcR(t)^{3} \sqrt{1 + \mcW(t)^2 + \ell \mcR(t)^{-2}}} \geq \frac{\ell}{r^3\sqrt{1 + w^2 + \ell r^{-2}}}$$
and hence for all $t \geq 0$
$$\mcW(t) \geq \frac{\ell}{r^3\sqrt{1 + w^2 + \ell r^{-2}}}t + w.$$
Taking $t > \frac{-w r^3\sqrt{1 + w^2 + \ell r^{-2}}}{\ell} > 0$ implies $\mcW(t) > 0$, thus contradicting the assumption that $T_0 = \infty$, and we conclude that $T_0$ must be finite. In particular, the upper bound
$$ T_0 \leq \frac{-w r^3\sqrt{1 + w^2 + \ell r^{-2}}}{\ell}$$
follows from this argument.
The positivity of $\dot{\mcW}(t)$ for all $t >0$ further implies the uniqueness of $T_0$ and the proof is complete.
\end{proof}

With the characteristics well-understood, our other lemma provides lower bounds on the charge density and electric field in terms of the total mass and the position of the particle on the support of $f_0$ that is furthest from the origin.
The proof is identical to that of the non-relativistic system and can be found in \cite{BCP}.

\begin{lemma}
\label{L3}
Let $f(t,r, w, \ell)$ be a spherically-symmetric solution of \eqref{RVM} with associated charge density $\rho(t,r)$ and electric field $E(t,x)$, and let $\left (\mcR(t, 0, r, w, \ell), \mcW(t, 0, r, w, \ell), \mcL(t, 0, r, w, \ell) \right)$ be a characteristic solution of \eqref{charang}.  
If at some $T \geq 0$ we have
$$ \sup_{(r, w, l) \in S(0)} \mcR(T, 0, r, w, \ell) \leq B,$$ then
$$\Vert \rho(T) \Vert_\infty \geq \frac{3\Vert f \Vert_1}{4\pi B^3}$$
and
$$\Vert E(T) \Vert_\infty \geq \frac{\Vert f \Vert_1}{B^2}.$$
\end{lemma}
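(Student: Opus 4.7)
The plan is to exploit the hypothesis on the characteristics to show that, at time $T$, the spatial support of $\rho(T,\cdot)$ lies in the ball of radius $B$ about the origin, and then combine this geometric fact with the conservation of $\|f\|_1$ along the Vlasov flow together with the spherical-symmetry representation \eqref{fieldang} of the electric field.

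First I would observe that because $f$ is transported along the characteristics of \eqref{vlasovang}, the support $S(T)$ is the image of $S(0)$ under the flow $(r,w,\ell)\mapsto(\mcR(T,0,r,w,\ell),\mcW(T,0,r,w,\ell),\ell)$. The hypothesis
$$\sup_{(r,w,\ell)\in S(0)}\mcR(T,0,r,w,\ell)\le B$$
therefore implies $f(T,r,w,\ell)=0$ whenever $r>B$, and by \eqref{rhoang} it follows that $\rho(T,r)=0$ for $r>B$. In particular, the spatial support of $\rho(T,\cdot)$ is contained in the ball $\{|x|\le B\}$.

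Next, I would use the fact that the total mass is conserved along the Vlasov flow, so $\|f(T)\|_1=\|f_0\|_1$ (this is the standard computation obtained by integrating \eqref{vlasovang} against $1$ in $(x,p)$, the divergence-form structure of the transport killing all boundary contributions). Writing $\|f(T)\|_1=\int_{\mathbb{R}^3}\rho(T,x)\,dx$ and using the inclusion just established,
$$\|f\|_1=\int_{|x|\le B}\rho(T,x)\,dx\le\frac{4\pi B^3}{3}\|\rho(T)\|_\infty,$$
which rearranges to the claimed lower bound on $\|\rho(T)\|_\infty$.

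For the electric-field bound, I would invoke \eqref{fieldang}, which gives $|E(T,x)|=m(T,|x|)/|x|^2$. Since $\rho(T,s)=0$ for $s>B$, the definition \eqref{massang} yields
$$m(T,B)=4\pi\int_0^B s^2\rho(T,s)\,ds=\int_{\mathbb{R}^3}\rho(T,x)\,dx=\|f\|_1,$$
so evaluating at any $x$ with $|x|=B$ produces $|E(T,x)|=\|f\|_1/B^2$, and the supremum is at least this value. There is essentially no serious obstacle: the only ingredients are the transport of the support, mass conservation, and the Newtonian form of the radial field, none of which distinguish the relativistic from the classical case, which is exactly why the authors can cite \cite{BCP} verbatim.
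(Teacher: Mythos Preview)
Your argument is correct and is precisely the standard one: confinement of the spatial support of $f(T)$ inside $\{|x|\le B\}$ via the characteristic hypothesis, mass conservation to bound $\|\rho(T)\|_\infty$ from below, and evaluation of the radial field $m(T,r)/r^2$ at $r=B$. The paper itself does not supply a separate proof but simply cites \cite{BCP}, noting the argument is identical to the non-relativistic case---exactly the point you make in your final sentence.
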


\section{Proofs of the main results}
\label{proofs}

\subsection{Initial data}
We will use differing initial datum, but with similar structure, to prove each of the main results.
Let $H:[0,\infty) \to [0,\infty)$ be any function satisfying
$$\int_{\mathbb{R}^3} H(\vert u \vert^2) \ du = \frac{3}{4\pi}$$
with $supp(H) \subset [0,1]$.
We rescale this function for any $\eps \in (0,1)$ by defining
$$H_\eps(\vert u \vert^2) = \frac{1}{\eps^3} H\left (\frac{\vert u \vert^2}{\eps^2} \right )$$
so that 
$$\int_{\mathbb{R}^3} H_\eps(\vert u \vert^2) \ du = \frac{3}{4\pi}$$
and
$supp(H_\eps) \subset [0,\eps^2].$
Further, for every $\eps > 0$, $x,p \in \mathbb{R}^3$, and $a_0>0$ define
$$h_\eps(x,p) = H_\eps \left ( \left \vert \frac{x}{\eps^2} + a_0 p \right \vert^2 \right ).$$
It follows that
\[
\int_{\mathbb{R}^3} h_\eps(x,p) \ dp = \frac{3}{4\pi a_0^3}
\]
for every $x \in \mathbb{R}^3$. 
We also choose a cut-off function $\phi \in C^\infty\left ((0,\infty); [0,1]\right )$ satisfying
\[
\left \{
\begin{array}{ll}
& \phi(r) = 0 \quad \mathrm{for} \quad r \not\in [a_0 - \epsilon^3, a_0 + \epsilon^3],\\
& \phi(r) = 1 \quad \mathrm{for} \quad r \in (a_0 - \frac{1}{2}\epsilon^3, a_0 + \frac{1}{2}\epsilon^3).
\end{array}
\right.
\]
With this, we may define the class of initial data within the proofs of Theorems \ref{T1} and \ref{T2}.
For any $a_0 > 0$, $\eps > 0$, and $M > 0$, we let
\begin{equation}
\label{finit1}
\mathring{f}_1(x,p) = h_\eps(x,p) \phi(\vert x \vert)
\end{equation}
and
\begin{equation}
\label{finit1}
\mathring{f}_2(x,p) = M \frac{\mathring{f}_1(x,p)}{\Vert \mathring{f}_1 \Vert_1}.
\end{equation}
Finally, we take $f_0 = \mathring{f}_1$ in the proof of Theorem \ref{T1}, and $f_0 = \mathring{f}_2$ in the proof of Theorem \ref{T2}.
Further, we note that the total mass will be chosen in the proof of Theorem \ref{T1} via the initial data, while this quantity is given within Theorem \ref{T2}.

Though we will define the parameters $a_0$, $\eps$, and $M$ differently within each proof, the data will share some common features.
From the upper bound on the support of $H_\eps$, we have on the support of $f_0(x,p)$ the inequality
$$\left \vert \frac{x}{\eps^2} + a_0 p \right \vert^2 < \eps^2.$$
Using the angular coordinates of \eqref{ang} and the identity $\vert p \vert^2 = w^2 + \ell r^{-2}$ this becomes 
\begin{equation}
\label{suppcond}
\left (\frac{r}{\eps^2} + a_0 w \right )^2 + \ell \left (\frac{a_0}{r} \right)^2 < \eps^2
\end{equation}
for every $(r,w,\ell) \in S(0)$.
Additionally, we have
\begin{equation}
\label{suppcond2}
a_0-\epsilon^3< r<a_0+\epsilon^3
\end{equation}
on the support of $f_0$.
Upon performing a translation, we find
$$ \int_{\mathbb{R}^3} f_0(x,p) \ dp = \left (\int_{\mathbb{R}^3} h_\eps(x,p) \ dp \right) \phi (|x|) = \frac{3}{4\pi a_0^3} \phi(|x|),$$ 
and it follows that the initial charge density $\rho_0=\int f_0\,dp$ satisfies
\begin{equation}
\label{boundrho}
\rho_0(r)\leq \frac{3}{4\pi a_0^3}, \qquad\forall r > 0
\end{equation}
and
\begin{equation}
\label{equalrho}
\rho_0(r) = \frac{3}{4\pi a_0^3},\qquad \text{for }r \in \left [a_0 - \frac{1}{2}\eps^3, a_0 + \frac{1}{2}\eps^3 \right ].
\end{equation}
Notice that~\eqref{suppcond} also implies
\begin{equation}
\label{suppcondl}
\ell  < \left (\frac{r}{a_0} \right )^2 \eps^2
\end{equation}
on $S(0)$.
Additionally, the support condition~\eqref{suppcond} further yields $\left \vert r + \eps^2a_0 w \right  \vert < \eps^3$ and thus \eqref{suppcond2} implies 
$$-\frac{1}{\eps^2} - \frac{2\eps}{a_0} < w <  -\frac{1}{\eps^2} + \frac{2\eps}{a_0}$$ on $S(0)$. 
Hence, all particles possess an initial radial velocity belonging to this interval.
Finally, we remark that since these initial data, $\mathring{f}_1$ and $\mathring{f}_2$ are spherically-symmetric, they must give rise to global-in-time, spherically-symmetric solutions of \eqref{RVM}.

\subsection{Proof of Theorem~\ref{T1}}
To prove the first result, the parameter $a_0$ will be fixed and we may choose $T = \mathcal{O}(a_0)$ and $\eps$ sufficiently small so that particles are quickly concentrated near the origin and obtain radial positions as small as one desires, thereby causing the density and field to become arbitrarily large at time $T > 0$.

\begin{proof}
Let $C_1, C_2>0$ be given, define the constant
$$a_0 = \left (\frac{32}{C_1} \right )^{1/3}$$
%
%
and set $$T = a_0 - 9\eps^2.$$
Throughout, we will take $\eps \in (0,1)$ sufficiently small, and in particular, choose $\eps < \frac{a_0}{9}$ to guarantee $T > 0$.

With this, \eqref{suppcond2} and \eqref{boundrho} imply that the total mass obeys the following upper bound for $\eps$ sufficiently small
\begin{eqnarray*}
M &=& \int_{\mathbb{R}^3} \rho_0(x) \ dx=4\pi \int_{a_0-\eps^3}^{a_0+\eps^3}\rho_0(r) r^2\,dr\\
& \leq & \frac{1}{a_0^3} \left [ (a_0+\eps^3)^3 - (a_0 - \eps^3)^3 \right ]\\
& = & \frac{6\eps^3}{a_0} +  \frac{2\eps^9}{a_0^3}\\
& \leq &  \frac{8\eps^3}{a_0},
\end{eqnarray*}
while \eqref{equalrho} implies that $M$ has the following lower bound
\begin{eqnarray*}
M &\geq& 4\pi \int_{a_0-\frac{1}{2}\eps^3}^{a_0+\frac{1}{2}\eps^3}\rho_0(r) r^2\,dr\\
& \geq & \frac{1}{a_0^3} \left [ (a_0+\frac{1}{2}\eps^3)^3 - (a_0 - \frac{1}{2}\eps^3)^3 \right ]\\
& = & \frac{3\eps^3}{a_0} +  \frac{\eps^9}{4a_0^3}\\
& \geq & \frac{3\eps^3}{a_0}
\end{eqnarray*}
for $\eps$ sufficiently small. %
Thus, we find
\begin{equation}
\label{mass}
3a_0^{-1}\eps^3 \leq M \leq 8a_0^{-1} \eps^3.
\end{equation}
On $S(0)$, taking $\eps$ sufficiently small further implies
\begin{equation}
\label{rw}
\left \{
\begin{gathered}
\frac{1}{2}a_0 < a_0 - \eps^3 < r  < a_0 + \eps^3< \frac{3}{2}a_0\\
-\frac{3}{2}\eps^{-2} < -\eps^{-2} - \frac{2\eps}{a_0} < w < -\eps^{-2}+ \frac{2\eps}{a_0} < -\frac{1}{2} \eps^{-2}.
\end{gathered}
\right.
\end{equation}
Additionally, \eqref{rw} combined with \eqref{suppcondl} implies a uniform upper bound on the angular momentum on $S(0)$, namely
\begin{equation}
\label{l}
\ell < \left (\frac{3}{2} \right )^2 \eps^2 \leq 1
\end{equation}
for $\eps$ sufficiently small.

To prove the conclusions of the theorem at time zero, we first notice that by~\eqref{boundrho}
$$\Vert \rho(0) \Vert_\infty \leq \frac{3}{4\pi a_0^3} \leq C_1.$$
Similarly, due to \eqref{fieldang} and \eqref{suppcond2} the field satisfies $|E(0,x)| = 0$ for $ \vert x \vert < a_0 - \eps^3$, while for $\vert x \vert > a_0 + \eps^3$
$$\vert E(0,x) \vert \leq \frac{M}{r^2} \leq \frac{M}{a_0^2} \leq \frac{8\eps^3}{a_0^3} \leq \frac{8}{a_0^3}.$$
Finally, for $a_0 -\eps^3 \leq \vert x \vert \leq a_0 + \eps^3$, we have
$$\vert E(0,x) \vert \leq \frac{M}{r^2} \leq \frac{M}{\left(\frac{1}{2} a_0 \right)^2} = \frac{4M}{a_0^2} \leq \frac{32 \eps^3}{a_0^3} \leq \frac{32}{a_0^3}.$$
Hence, we find
$$\Vert E(0) \Vert_\infty \leq \frac{32}{a_0^3} \leq C_1.$$
Therefore, we merely need to establish the contrasting inequalities at time $T$ to complete the proof.

Since the trajectories of particle positions are convex, they must each attain a minimum, and we use this construction to create a uniform lower bound over $S(0)$ on the time until particles attain their minima. 
In order to exclude those particles in $S(0)$ with vanishing angular momentum, we define  $$S_+ = \{ (r, w, \ell) \in S(0): \ell > 0\}.$$
Then, using Lemma \ref{L1}, we find for each $(r,w,\ell) \in S_+$ a time $T_0(r,w,\ell)$ such that
$$T_0 > r \left (1 - \sqrt{\frac{D}{r^2w^2 + D}} \right) \geq r - \frac{\sqrt{D}}{\vert w \vert} $$
where
$$D = \ell + Mr\sqrt{1 + w^2 + \ell r^{-2}} > 0,$$
and
$$\dot{\mcR}(t) \leq 0 \quad \mathrm{for} \quad  t \in [0,T_0].$$
Using \eqref{mass}, \eqref{rw}, and \eqref{l} we find
\begin{equation}
\label{Dbound}
D \leq 1 + \frac{8\eps^3}{a_0} \left (\frac{3}{2} a_0 \right ) \sqrt{1 + \frac{9}{4} \eps^{-4} + \frac{4}{a_0^2}} \leq 1 + C\eps \leq 4
\end{equation}
for $\eps$ sufficiently small.
%
Estimating on $S_+$, we use \eqref{rw} in order to arrive at
$$ T_0 > r - \frac{\sqrt{D}}{\vert w \vert} \geq (a_0 - \eps^3) - 4 \eps^2 > a_0 - 9\eps^2 = T$$
for $\eps$ sufficiently small.

Therefore, $T \in [0, T_0)$ for every $(r,w,\ell) \in S_+$ and we apply Lemma \ref{L3} to find
$$\mcR(T)^2 \leq \left (r - \frac{\vert w \vert}{\sqrt{1 + w^2 + \ell r^{-2}}}T \right )^2 + \frac{D}{r^2(1 + w^2 + \ell r^{-2})}T^2 =: I + II.$$
Defining $g(x) = \frac{1 - x}{1+ 2x}$ and noting that $g'(0) = -3$ and $g''(x) > 0$, we find 
\begin{eqnarray*}
\frac{\vert w \vert}{\sqrt{1 + w^2 + \ell r^{-2}}} & \geq & \frac{\frac{1}{\eps^2} - \frac{2}{a_0}\eps}{\sqrt{1 + \left (\frac{1}{\eps^2} + \frac{2}{a_0}\eps \right)^{2} + \left(\frac{1}{2}a_0 \right)^{-2}}}\\
& = & \frac{1 - \frac{2}{a_0}\eps^3}{\sqrt{\left (1 + \frac{2}{a_0}\eps^3 \right)^{2} + C\eps^4}}\\
& \geq & \frac{1 - \frac{2}{a_0}\eps^3}{\sqrt{1 + \frac{8}{a_0}\eps^3 + \frac{16}{a_0^2}\eps^6}}\\
& = & \frac{1 - \frac{2}{a_0}\eps^3}{1 + \frac{4}{a_0}\eps^3 }\\
& = & g \left (\frac{2}{a_0}\eps^3\right)\\
& \geq & g(0) + g'(0) \frac{2}{a_0}\eps^3\\
& = & 1 - \frac{6}{a_0}\eps^3
\end{eqnarray*}
for $\eps$ sufficiently small.
Because of this and $T = a_0 - 9\eps^2 < a_0 - \eps^3 \leq r$, it follows that
\begin{eqnarray*}
I  & \leq & \left (r - \left (1-\frac{6}{a_0}\eps^3 \right) T \right)^2\\
& \leq & \left ( a_0 + \eps^3 - [a_0 - 9\eps^2] + \frac{6}{a_0}\eps^3[a_0 - 9\eps^2]\right )^2\\
& = & \left (9 \eps^2 + 7\eps^3 - \frac{54}{a_0} \eps^5 \right )^2\\
& \leq & 336 \eps^4
\end{eqnarray*}
for $\eps$ sufficiently small.
Additionally, using \eqref{rw}, \eqref{Dbound}, and $0 < T \leq a_0$ it follows that
$$II \leq \frac{D}{r^2w^2}T^2 \leq \frac{4}{\left (\frac{1}{2} a_0 \right )^2 \left (\frac{1}{2} \eps^{-2} \right )^2} a_0^2 \leq 64\eps^4.$$

Combining these esimates yields  
$$\mcR(T)^2 \leq 400\eps^4.$$
Since this provides a uniform bound on $\mcR(T)$ over the set $S_+$, we take the supremum over all such triples to find
$$\sup_{(r,w,\ell) \in S(0)} \mcR(T, 0, r, w,\ell) = \sup_{(r,w,\ell) \in S_+} \mcR(T, 0, r, w,\ell) \leq 20 \eps^2.$$

Finally, invoking Lemma \ref{L3}, the upper bound on spatial characteristics implies a lower bound on the charge density, and therefore using \eqref{mass}
$$ \Vert \rho(T) \Vert_\infty \geq \frac{3M}{4\pi \left ( 20\eps^2 \right)^3 } = \frac{C}{a_0\eps^3} \geq C_2$$
for $\eps$ sufficiently small.
The same lemma also provides a lower bound on the field so that 
$$\Vert E(T) \Vert_\infty \geq \frac{M}{(20\eps^2)^2} = \frac{C}{a_0 \eps} \geq C_2$$
for $\eps$ sufficiently small, and the proof is complete.
\end{proof}

\subsection{Proof of Theorem~\ref{T2}}
Unlike the first result, $T > 0$ will be given here and we may choose $a_0$ sufficiently large and $\eps$ sufficiently small so that particles are far enough from the origin that the initial large momenta they experience will concentrate them near $r = 0$ only at the given time $T$. As before, this behavior implies that the density and field become arbitrarily large at this time.

\begin{proof}
Let $C_1,C_2>0$ and $T > 0$ be given, define the constant
$$C_0 = \sqrt{C_1T},$$
and set $$M=C_1 \qquad \mathrm{and} \quad a_0 = T + 16C_0\eps.$$
As before, we will take $\eps \in (0,1)$ sufficiently small throughout the proof.
Since $a_0 > T$, we find the useful inequalities
\begin{equation}
\label{rw2}
\left \{
\begin{gathered}
\frac{1}{2}a_0 < a_0 - \eps^3 < r  < a_0 + \eps^3 < \frac{3}{2}a_0\\
-\frac{3}{2}\eps^{-2} < -\frac{1}{\eps^2} - \frac{2}{T}\eps < w < -\frac{1}{\eps^2} + \frac{2}{T}\eps < -\frac{1}{2} \eps^{-2}
\end{gathered}
\right.
\end{equation}
on $S(0)$ and for $\eps$ sufficiently small.
Additionally, \eqref{rw2} combined with \eqref{suppcondl} implies a uniform upper bound on the angular momentum on $S(0)$ for $\eps$ sufficiently small, namely
\begin{equation}
\label{l2}
\ell < \left (\frac{a_0 + \eps^3}{a_0} \right )^2 \eps^2 \leq 1.
\end{equation}

As in the proof of Theorem \ref{T1}, we must exclude those particles in $S(0)$ with vanishing angular momentum, and thus we again let  $$S_+ = \{ (r, w, \ell) \in S(0): \ell > 0\}$$
and estimate on $S_+$.
Because the enclosed mass satisfies $0 \leq m(t,r) \leq M = C_1$ for all $t, r \geq 0$, we use Lemma \ref{L1} to find for each $(r,w,\ell) \in S_+$ a time $T_0(r,w,\ell)$ such that
$$T_0 > r \left (1 - \sqrt{\frac{D}{r^2w^2 + D}} \right) \geq r - \frac{\sqrt{D}}{\vert w \vert} $$
where
$$D = \ell + C_1r\sqrt{1 + w^2 + \ell r^{-2}} > 0,$$
and
$$\dot{\mcR}(t) \leq 0 \quad \mathrm{for} \quad  t \in [0,T_0].$$
Using \eqref{rw2} and \eqref{l2} we find for $\eps$ sufficiently small
\begin{eqnarray*}
D & \leq & 1 + C_1 \left ( a_0 + \eps^3 \right ) \sqrt{1 + \frac{9}{4} \eps^{-4} + \left (\frac{1}{2} a_0 \right)^{-2}}\\
& \leq & 1 + C_1 \left ( a_0 + \eps^3 \right ) \left ( 2\eps^{-2} \right ) \\
& \leq & 1 + C_1 \left ( T + 16C_0\eps + \eps^3 \right ) \left ( 2\eps^{-2} \right ) \\
& \leq & 4C_0^2\eps^{-2}.
\end{eqnarray*}

Next, we use \eqref{rw2} to produce a lower bound on $T_0$ and this yields
$$T_0 > r - \frac{\sqrt{D}}{\vert w \vert} \geq a_0 - \eps^3 - \frac{2C_0\eps^{-1}}{\frac{1}{2} \eps^{-2}} = a_0 - \eps^3 - 4C_0 \eps = T + 12C_0 \eps - \eps^3 \geq T $$
for $\eps$ sufficiently small.
Therefore, $T \in [0, T_0)$ for every $(r,w,\ell) \in S_+$ and we apply Lemma \ref{L1} to find
$$\mcR(T)^2 \leq \left (r - \frac{\vert w \vert}{\sqrt{1 + w^2 + \ell r^{-2}}}T \right )^2 + \frac{D}{r^2(1 + w^2 + \ell r^{-2})}T^2 =: I + II.$$
Defining $g(x) = \frac{1 - x}{1+ 2x}$ and noting that $g'(0) = -3$ and $g''(x) > 0$, we find 
\begin{eqnarray*}
\frac{\vert w \vert}{\sqrt{1 + w^2 + \ell r^{-2}}} & \geq & \frac{\frac{1}{\eps^2} - \frac{2}{T}\eps}{\sqrt{1 + \left (\frac{1}{\eps^2} + \frac{2}{T}\eps \right)^{2} + \left(\frac{1}{2}a_0 \right)^{-2}}}\\
& = & \frac{1 - \frac{2}{T}\eps^3}{\sqrt{\left (1 + \frac{2}{T}\eps^3 \right)^{2} + C\eps^4}}\\
& \geq & \frac{1 - \frac{2}{T}\eps^3}{\sqrt{1 + \frac{8}{T}\eps^3 + \frac{16}{T^2}\eps^6 }}\\
& = & \frac{1 - \frac{2}{T}\eps^3}{1 + \frac{4}{T}\eps^3 }\\
& = & g \left (\frac{2}{T}\eps^3\right)\\
& \geq & g(0) + g'(0) \frac{2}{T}\eps^3\\
& = & 1 - \frac{6}{T}\eps^3
\end{eqnarray*}
for $\eps$ sufficiently small.
Because of this and $T = a_0 - 16C_0\eps < a_0 - \eps^3 < r$, it follows that
$$I  \leq \left (r - \left (1-\frac{6}{T}\eps^3 \right )T \right)^2 \leq  \left ( a_0 + \eps^3 - T + 6\eps^3 \right )^2 = \left ( 16C_0\eps + 7\eps^3 \right )^2 \leq 336C_0^2 \eps^2$$
for $\eps$ sufficiently small.
Additionally, using \eqref{rw} and $0 < T \leq a_0$ it follows that
$$II \leq \frac{D}{r^2w^2}T^2 \leq \frac{4C_0^2 \eps^{-2}}{\left (\frac{1}{2} a_0 \right )^2 \left (\frac{1}{2} \eps^{-2} \right )^2} a_0^2 \leq 64C_0^2 \eps^2.$$

Combining these esimates yields  
$$\mcR(T)^2 \leq 400C_0^2\eps^2.$$
Since this provides a uniform bound on $\mcR(T)$ over the set $S_+$, we take the supremum over all such triples to find
$$\sup_{(r,w,\ell) \in S(0)} \mcR(T, 0, r, w,\ell) = \sup_{(r,w,\ell) \in S_+} \mcR(T, 0, r, w,\ell) \leq 20C_0 \eps.$$

Finally, invoking Lemma \ref{L3}, the upper bound on spatial characteristics implies a lower bound on the charge density and therefore
$$ \Vert \rho(T) \Vert_\infty \geq \frac{3C_1}{4\pi \left ( 20C_0\eps \right)^3 } = \frac{C}{\eps^3} \geq C_2$$
for $\eps$ sufficiently small.
The same lemma also provides a lower bound on the field so that 
$$\Vert E(T) \Vert_\infty \geq \frac{C_1}{(20C_0\eps)^2} = \frac{C}{\eps^2} \geq C_2$$
for $\eps$ sufficiently small, and the proof is complete.
\end{proof}

\end{document}